\documentclass[aps, pra, a4paper, showpacs, twocolumn, english, 10pt, nofootinbib]{revtex4-2}
\usepackage{bbm, amsthm, bm,textcomp, nicefrac,geometry,ragged2e}
\usepackage[dvipsnames]{xcolor}
\usepackage{float}
\usepackage[bbgreekl]{mathbbol}
\usepackage{graphicx,epstopdf,color,verbatim,enumitem}
\usepackage[normalem]{ulem}
\usepackage{pbox,array}
\usepackage{booktabs}
\usepackage{mathrsfs}
\usepackage{amssymb}
\usepackage{textgreek}
\usepackage{mathtools}
\usepackage{stackrel}
\usepackage[thinlines]{easytable}
\usepackage{amsmath}
\usepackage[utf8]{inputenc}
\makeatletter
\usepackage{soul}
\usepackage[caption=false]{subfig}
\usepackage{hyperref}
\hypersetup{
    colorlinks = true,
    linkcolor = teal,
    citecolor = teal,
    filecolor = Black,
    urlcolor = Black,
}
\usepackage[T1]{fontenc}
\usepackage[caption=false]{subfig}
\usepackage{babel}
\usepackage[linesnumbered,ruled]{algorithm2e}

\addto\captionsenglish{}

\newtheorem{definition}{Definition}
\newtheorem{theorem}{Theorem}

\definecolor{brickred}{rgb}{0.8, 0.0, 0.0}

\usepackage{tikz}
\usetikzlibrary{shapes.geometric, positioning, decorations.pathmorphing, shadows, calc}

\definecolor{nodeblue}{RGB}{44, 62, 80}       
\definecolor{activeorange}{RGB}{230, 126, 34} 
\definecolor{linkgray}{RGB}{149, 165, 166}    

\begin{document}

\title{Universal Nested Quantum Switch}

\author{Jorge Miguel-Ramiro, Maria Flors Mor-Ruiz and Wolfgang D\"ur$^1$}

\affiliation{$^1$Universit\"at Innsbruck, Institut f\"ur Theoretische Physik, Technikerstra{\ss}e 21a, 6020 Innsbruck, Austria}

\date{\today}

\begin{abstract}
The quantum switch is a basic network primitive that allows one to connect multiple nodes in a quantum network via a central node. We show that the same functionality can be achieved with a different geometry that does not rely on a powerful and large central unit, but instead utilizes evenly distributed resources. This approach is resilient against node failures. We provide a nested construction with logarithmically many qubits per node and a total of $O(n\log n)$ Bell pairs, in contrast to other distributed approaches based on pre-shared entanglement that scale as $O(n^2)$. The construction achieves fully flexible pairwise connectivity, where the shared resource state can be locally transformed into $n/2$ arbitrarily distributed Bell states. We also present a graph state variant with just one qubit per node, which allows one to generate $O(n/\log^2 n)$ Bell pairs.
\end{abstract}

\maketitle

\section{Introduction}

Quantum networks promise to enable fundamentally new communication and quantum information processing capabilities by exploiting the distribution and manipulation of entanglement across spatially separated nodes \cite{Kimble2008, Wehner2018, Kozlowski2019, MiguelRamiro2021, Wei2022, Azuma2023}. Applications range from quantum key distribution \cite{Mehic2020, Wolf2021, Cao2022} and distributed quantum sensing  \cite{Sekatski2020, Zhang2021, Bugalho2025} to distributed and blind quantum computing \cite{Cacciapuoti2020, Main2025, Fitzsimons2017, Barz2012}. As networks scale in size and complexity, a central challenge is how to establish flexible, parallel entanglement connections between many nodes using limited and realistic quantum resources.

A basic network primitive addressing this challenge is the \textit{quantum switch}, whose role is to dynamically connect nodes upon request, potentially in parallel \cite{Vardoyan2023, Vardoyan2021, Dai2021, Avis2023, Gauthier2023, PrezCastro2024, MiguelRamiro2023}. Conceptually, a quantum switch enables the transformation of shared resources into a desired set of bipartite Bell pairs (or multipartite states), allowing different communication patterns to be realized. Such functionality can be implemented either by generating entanglement on demand using physical channels, or by consuming pre-distributed entangled resources.

Existing proposals of quantum switch functionality typically fall into two broad categories. Centralized approaches \cite{ Vardoyan2021, Panigrahy2023, Vardoyan2023, Dai2021, Avis2023, Gauthier2023,  Lee2022} rely on a powerful hub node that shares entanglement with all other nodes and performs joint measurements to establish requested connections. While this strategy is resource-efficient in terms of total entanglement and storage, it requires a large and highly reliable central memory presenting a single point of critical failure. At the opposite extreme, fully decentralized schemes \cite{MiguelRamiro2023} based on pre-sharing Bell pairs between all node pairs provide maximal flexibility, but at the cost of quadratic scaling in both total entanglement and local memory, making them impractical beyond small network sizes. Intermediate proposals optimize storage by tailoring resources to specific traffic patterns or network topologies, but generally do not provide relevant scaling benefits with respect to Bell-pair approaches.

In this work we introduce an alternative approach to quantum switching based on pre-existing entanglement that is fully distributed and avoids centralized control. Our construction, which we refer to as a \textit{nested quantum switch}, employs a highly symmetric entanglement geometry in which resources are distributed across the network. Each node stores only a logarithmic number of qubits, yet the global resource can be locally transformed to realize \textit{any} simultaneous bipartite connectivity pattern between the nodes. In this sense, the nested quantum switch achieves the functionality of a universal quantum switch while significantly improving the scaling of local quantum memory compared to other decentralized proposals.

Beyond favorable scaling, the distributed and symmetric nature of the resource leads to inherent resilience against failures. Since connectivity is not mediated by a single distinguished node, the loss of nodes or entangled links only involves a gradual degradation of performance rather than catastrophic failure. We analyze this robustness, showing that a large fraction of connectivity requests remain feasible even under losses or node failures. We analytically show that the nested architecture preserves high quality entanglement, as the end-to-end fidelity exhibits only a polynomial decay with the number of nodes.

Finally, we consider an extreme low-memory variant in which the entanglement structure is merged such that each node holds only a single qubit. While this setting no longer allows full flexibility, we show that it still supports the generation of a scalable number of simultaneous Bell pairs, highlighting a smooth trade-off between local memory and achievable parallel connectivity.

The remainder of this paper is organized as follows. In Sec~\ref{sec:back} we review the concept of a quantum switch and the existing strategies to realize it.  In Sec.~\ref{sec:model} we introduce the network model and the nested entanglement resource. We establish and prove the conditions under which universal bipartite connectivity can be achieved.  We also present numerical studies of robustness under node and link failures.  We provide numerical robustness and analytical fidelity analysis in Sec. \ref{sec:performance}. In Sec.~\ref{sec:graphstate} we discuss a low-memory variant of the construction. We conclude in Sec.~\ref{sec:discussion} with a discussion of implications for scalable quantum network architectures.

\begin{figure}
    \centering
    \includegraphics[width=0.85\linewidth]{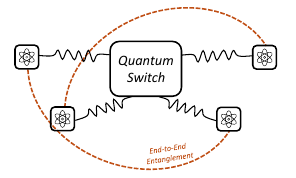}
    \caption{Quantum switch. General idea. A device or network architecture that enables nodes in a quantum network to establish end-to-end entanglement in different connectivity patterns, potentially simultaneously and in parallel.}
    \label{fig:switch_general}
\end{figure}

\section{Quantum switch. Definition and existing proposals } \label{sec:back}
We begin by formally defining the notion of a quantum switch in a network scenario, as distinct meanings of this term appear in the literature. We then review existing proposals that realize quantum switch functionality under different architectural assumptions, before introducing our alternative strategy in Sec.~\ref{sec:model}.

\subsection{Definition of a quantum switch}

In the context of quantum networks, the term \textit{quantum switch} is used with different meanings across the literature. In this work, we adopt a network level definition that is conceptually close to its classical counterpart, see Fig. \ref{fig:switch_general}.

{\definition{(\textbf{Quantum switch}).}
We define a quantum switch as a network primitive that enables the generation of end-to-end entanglement between selected pairs of (or, more generally, groups of) nodes on demand, potentially in parallel. The target request of a switching operation is a set of disjoint bipartite or multipartite states between network nodes, which can subsequently be used by higher-layer applications.}

Operationally, quantum-switch functionality is typically realized by transforming a collection of elementary entangled states, either generated on demand or pre-distributed, into the requested end-to-end connectivity configurations using local operations and classical communication. Throughout this work, we consider settings in which multiple pairwise connections may be established simultaneously, subject to the constraint that quantum resources (e.g., entangled links or local memory qubits) cannot be reused.

This definition is deliberately agnostic to the underlying network architecture. In particular, it encompasses both centralized realizations, where switching operations are mediated by a distinguished node, and fully distributed realizations, where switching capability emerges from the structure of the shared entanglement resource.

{\definition{(\textbf{Pairwise Universality}).}
For a network of $n$ nodes, any connectivity request can contain at most $n/2$ parallel bipartite links between distinct pairs of nodes. If a given quantum switch construction can guarantee \textit{any} configuration consisting of $n/2$ disjoint Bell pairs, corresponding to full pairwise connectivity, then we say the switch has pairwise universality.  }

The number of distinct requests grows rapidly with the network size. Specifically, the number of different ways to pair $n$ nodes into $n/2$ disjoint bipartite connections is given by the number of fixed-point-free involutions, i.e., $(n-1)!! \;=\; \prod_{k=1}^{n/2} (2k-1)$. 
Note that any request involving fewer than $n/2$ simultaneous links can also be obtained by discarding unused pairs.

We restrict in this work to quantum switch constructions that are pairwise universal. For simplicity, we refer to such constructions simply as universal throughout. Extensions to multi-round operation, multi-partite target states, or traffic-dependent scheduling policies are beyond the scope of this work.

\subsubsection{Relation to other paradigms}

In the broader context of quantum networks, the generation of end-to-end entanglement between remote nodes has been studied under several related paradigms, most notably entanglement distribution \cite{Gyongyosi2019, Meignant2019, Fischer2021, Sutcliffe2023, MR2025, Mazza2025} and entanglement routing \cite{Pant2019, Schoute2016, Lee2022, Shi2024, Abane2024, Devulapalli2024}. Entanglement distribution refers to the process of generating and delivering entangled states across a network, typically by creating short-range entangled links and combining them via entanglement swapping at intermediate nodes to span longer distances. Entanglement routing generalizes this notion by addressing how sequences of adjacent entangled links and swapping operations should be selected in order to satisfy one or more end-to-end entanglement requests under probabilistic link generation, fidelity constraints, and finite coherence times. In many settings, routing protocols choose paths through a given network topology in close analogy with classical routing, taking into account network state information and performance objectives such as throughput or latency. 

While entanglement distribution and routing protocols focus on the generation and scheduling of end-to-end entanglement over time, they do not by themselves define a network primitive for realizing arbitrary sets of simultaneous pairing requests. In contrast, a quantum switch, as defined here, is an abstract primitive that makes use of entangled resources to generate a specified set of disjoint end-to-end Bell pairs in parallel. Entanglement routing may be employed as a subroutine within a switch implementation, but the switch abstraction emphasizes on-demand, simultaneous connectivity and instantaneous parallel reconfiguration, rather than sequential path-wise generation of individual entangled pairs. In an informal sense, routing addresses how end-to-end entanglement can be established over time under network constraints, whereas switching focuses on which global connectivity patterns can be realized in parallel within a single network use.

At the same time, we distinguish this notion from the quantum switch introduced in quantum information theory as a higher-order operation controlling the order of quantum channels, which is unrelated to the networking functionality considered here.

\subsection{Existing quantum switch architectures}
We review here known proposed realizations of universal pairwise quantum switches, distinguishing between centralized constructions based on a single central hub node and distributed approaches.

\subsubsection{Centralized approaches}
A common realization of quantum-switch functionality relies on a distinguished central node acting as an entanglement hub, see Fig. \ref{fig:switch_prev} (a). In this star-like topology architecture, each user (or end node) is connected to the hub via a physical link over which link level entangled states (e.g., Bell pairs) are generated. The hub stores one or more qubits per user and performs Bell-state measurements and entanglement swapping operations to establish end-to-end entanglement between arbitrarily selected user pairs on demand \cite{Vardoyan2023, Vardoyan2021, Dai2021, Avis2023, Gauthier2023, PrezCastro2024}. 

Centralized quantum switches have been studied from several complementary perspectives. One line of work models the switch as a queuing system that processes stochastic entanglement requests, generates link-level entanglement probabilistically, and performs swapping according to a scheduling policy; the corresponding capacity regions and stability conditions have been characterized under various assumptions \cite{Vardoyan2019, Vardoyan2023, Vardoyan2021, Panigrahy2023}. Other studies incorporate physical constraints such as finite memory lifetimes and imperfect links  \cite{Kumar2023, Avis2023}. Control and allocation mechanisms for these architectures have also been proposed to ensure  efficient resource utilization \cite{Gauthier2023, Lee2022}.

Centralized strategies are appealing because the number of physical systems scales linearly with the number of users. However, they inherently concentrate quantum memory and operational complexity at a single hub, creating an critical single point of failure. As a result, existing analyses typically emphasize average-case performance metrics under stochastic traffic models, rather than worst-case or combinatorial guarantees for realizing arbitrary simultaneous pairing requests among all users.

\subsubsection{Distributed strategies}
Beyond centralized switch architectures, several distributed strategies have been proposed to realize quantum switch functionality by pre-sharing entangled resources across the network. The most straightforward construction is based on pre-established Bell pairs between all node pairs, see Fig. \ref{fig:switch_prev} (b). A complete Bell-pair resource trivially guarantees universality, in the sense that any simultaneous pairing configuration can be realized. This, however, comes at the cost of quadratic scaling in both total entanglement and local memory. Concretely, the number of required Bell pairs for a network of $n$ nodes is $\sum_{i=1}^{n}(n-i)={n(n-1)}/{2}$, implying $\Theta(n)$ qubits per node.

\begin{figure}[t!]
    \centering
    \includegraphics[width=\linewidth]{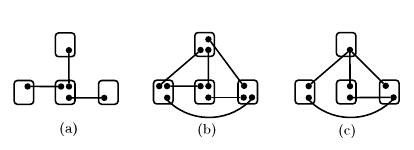}
    \caption{Different proposals for quantum switches.
(a) Centralized approach: a distinguished central node shares Bell pairs with all network nodes and performs entanglement swapping to establish end-to-end connections.
(b) Distributed Bell-pair approach: all-to-all direct connectivity based on  Bell pairs between every node pair.
(c) Decreasing size GHZ approach: multipartite GHZ entangled states of varying sizes are used to enable pairwise connections via local measurements.
All these strategies can realize universal pairwise connectivity. }
    \label{fig:switch_prev}
  
\end{figure}

An alternative class of distributed constructions is based on multipartite entangled resources, such as GHZ states or more general graph states \cite{Hein2004, Hein2006}. In particular, it was shown \cite{Pirker2018, Pirker2019} that collections of GHZ states of decreasing size can be used to realize universal pairwise connectivity via appropriate local measurements, see Fig. \ref{fig:switch_prev} (c). While this approach reduces the need for storing all Bell pairs explicitly, the total entanglement cost still scales quadratically with the network size. Specifically, the total number of entangled qubits required scales as $\sum_{i=0}^{n/2-1}(n-i)=n(3n+2)/8$, 
and the generation and stabilization of multipartite states introduce additional operational complexity compared to bipartite resources.

Other distributed constructions based on structured network topologies have also been considered. For instance, generalizations of the butterfly network \cite{Hahn2019, MiguelRamiro2023} or 2D graph states \cite{Freund2024} were discussed, showing however similar scaling features. 

Overall, all known distributed constructions that guarantee universal pairwise connectivity  require resources scaling quadratically with the number of nodes. In contrast, the nested quantum switch introduced here provides a distributed Bell-pair-based resource that achieves universality with efficient scaling.

\begin{table*}[ht]
\centering

\label{tab:switch_comparison}
\begin{tabular}{@{}llll@{}}
\toprule
\textbf{Architecture} & \textbf{Memory per Node} & \textbf{Total Resource} & \textbf{Robustness} \\ \midrule
Centralized Hub \cite{Vardoyan2023, Vardoyan2021, Dai2021, Avis2023, Gauthier2023, PrezCastro2024} & $O(1)$ (User) / $O(n)$ (Hub) & $O(n)$ & Low (Single point of failure) \\
All-to-All Bell \cite{MiguelRamiro2023} & $\Theta(n)$ & $O(n^2)$ & High (Distributed) \\
GHZ-based \cite{Pirker2018, Pirker2019} & $\Theta(n)$ & $O(n^2)$  & Low (GHZ loss sensitivity) \\
{Nested Switch (Ours)} & $O(\log n)$ & $O(n \log n)$  & {High (Symmetry)} \\ \bottomrule
\end{tabular}
\caption{Comparison of different quantum switch architectures. }
\end{table*}

\section{Nested quantum switch} \label{sec:model}
We now introduce the nested quantum switch, a fully distributed and symmetric entanglement resource that enables universal parallel bipartite connectivity without relying on a central hub node. We first describe the resource construction and its graph-theoretic interpretation, and then formalize the switching problem and state the main universality result.

Consider a network of $n = 2^d$ nodes. Each node stores $2d-1 =(2\log_2 n- 1)$ qubits \footnote{Throughout this work, all logarithms are taken to base 2. For simplicity, we denote $\log_2$ simply by $\log$.}  (due to boundary conditions). The entanglement resource is constructed such that each node shares exactly one Bell pair per qubit with distinct other nodes, with the interaction distances following the hierarchical pattern $(1,\,2,\,4,\,8,\,\dots,\,2^{d-1} )$ in a nested fashion (see Fig.~\ref{fig:switch_general}), i.e., each node  $x$ connects to neighbors at modular distances $x \pm 2^k \pmod{2^d}$. As a result, each node participates in $O(d)$ Bell pairs, and the total number of Bell pairs in the network scales exactly as $O(n\log n)$.

The resulting entanglement structure defines a regular graph in which vertices correspond to network nodes and edges correspond to Bell pairs. This hierarchical entanglement pattern induces a highly symmetric interaction graph that strictly embeds a $d$-dimensional hypercube as a spanning subgraph, as formalized below.

\begin{figure}[t!]
    \centering
    \includegraphics[width=\linewidth]{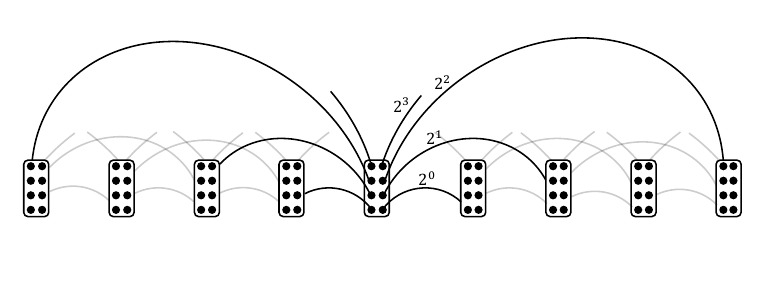}
   \caption{Nested quantum switch.
Each node is connected via Bell pairs to neighbors at distances $2^k$ ($k=0,\dots,d-1$) in a highly symmetric and distributed entanglement pattern.
Universal pairwise connectivity is guaranteed with total resource scaling of just $O(n\log n)$.}
    \label{fig:switch_general}
\end{figure}

\subsection{The nested quantum switch as a hypercube}
A classical $d$-dimensional hypercube $Q_d$ \cite{Harary1988} is a graph with $n = 2^d$ vertices, where each vertex is represented by a unique bitstring of length $d$. Two vertices are connected by an edge if and only if their bitstrings differ by exactly one bit. Formally, for any node $z \in \{0, \dots, n-1\}$, its neighbors are defined by the bitwise XOR operation $z \oplus 2^k$ for $k \in \{0, \dots, d-1\}$. In our network model, an edge between two vertices represents a shared Bell pair, meaning each node in a hypercube architecture stores $d = \log n$ qubits.

Within this graph theory framework, entanglement swapping along a path corresponds to sequentially consuming all Bell pairs to establish an end-to-end connection between the path endpoints. Since each Bell pair is a finite resource that can only be consumed once, simultaneous connections must be implemented using edge-disjoint paths.


We can interpret the $d$-dimensional hypercube as a spanning subgraph of the physical nested architecture. The distinction lies in the boundary conditions: while the hypercube is defined by bit-flip logic ($z \oplus 2^k$), the nested switch is built on cyclic modular arithmetic ($z \pm 2^k \pmod n$). Crucially, flipping the $k$-th bit of a node $z$ corresponds to either adding $2^k$ (if the $k$-th bit is 0) or subtracting $2^k$ (if the $k$-th bit is 1). Consequently, every hypercube edge $z \oplus 2^k$ maps perfectly to exactly one of the two bidirectional physical links ($z+2^k$ or $z-2^k$) present in the nested geometry.

The cyclic symmetry of the nested switch results in a higher connectivity, specifically a degree of $2d-1$ compared to the hypercube degree $d$. While the nested switch is not exactly isomorphic to two copies of a hypercube due to its non-bipartite symmetry, it possesses an equivalent total edge capacity. Specifically, the nested switch provides $nd - n/2$ physical Bell pairs, matching the resource of the two hypercube layers required for 2-rearrangeability \cite{Gu1997, Choi1993, Shen1994}. This relationship ensures the nested switch inherits the hypercube universal routing capacities.

A switching request is specified by a set of disjoint node pairs, $
\{(u_1,v_1),(u_2,v_2),\dots,(u_{n/2},v_{n/2})\}$, where $u_k,v_k \in \{1,\dots,n\}$ label network nodes. Such a set corresponds to a perfect matching \cite{Fink2007, Diestel2025} of the vertex set, i.e., each node appears in exactly one pair. Equivalently, the request can be identified with a fixed-point free involution $\pi = (u_1,v_1)(u_2,v_2)\cdots(u_{n/2},v_{n/2})$,
where each transposition denotes a requested Bell pair.

Given a switching request $\pi$, the task of the quantum switch is to establish Bell pairs between each $u_i$ and $v_i$ simultaneously using only local operations and classical communication, while consuming each Bell pair in the resource at most once. In graph-theoretic terms, this is equivalent to finding, for each transposition $(u_i,v_i)$, a path in the hypercube graph such that all selected paths are pairwise edge-disjoint.


While general permutations may involve cycles of arbitrary length, an involution is a permutation composed only of disjoint transpositions and fixed points. Fixed-point free involutions correspond exactly to perfect matchings of the vertices \cite{Diestel2025} and describe the class of pairwise connection requests considered in this work.


\subsection{Proof of universality of the nested quantum switch}
Our central theoretical observation is that the switching problem defined above coincides with the problem of routing involutions on the hypercube using edge-disjoint paths. This problem has been studied in the classical literature in the context of permutation routing \cite{Sprague1994}.


\begin{theorem}[Universality of the nested quantum switch]
Consider a network of $n = 2^d$ nodes connected via a nested quantum switch. By restricting the entanglement resource strictly to its $d$-dimensional hypercube spanning subgraph, the switch is pairwise universal: for any switching request consisting of $n/2$ disjoint node pairs, there exists a sequence of local operations and entanglement swapping operations that establishes all requested Bell pairs, without reusing any Bell pair within the same resource layer.
\end{theorem}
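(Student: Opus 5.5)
The plan is to reduce the statement to a purely graph-theoretic routing problem on $Q_d$ and then prove that routing claim by induction on $d$, using the recursive structure of the hypercube. For the reduction: by the discussion in the previous subsection, every hypercube edge $\{z, z\oplus 2^k\}$ coincides with one physical Bell pair $\{z, z\pm 2^k \bmod n\}$ of the nested switch. Entanglement swapping along a path $u=w_0,w_1,\dots,w_\ell=v$ then yields a Bell pair between $u$ and $v$. This is done by Bell measurements at $w_1,\dots,w_{\ell-1}$ followed by Pauli corrections at the endpoints, and it consumes exactly the Bell pairs on that path. The swaps at different intermediate nodes commute, and a node that is interior to several paths measures disjoint pairs of its qubits. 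Hence it suffices to show the following: for every perfect matching $\pi$ of $V(Q_d)$ there is a family of pairwise edge-disjoint paths joining each $u_i$ to $v_i$. This is the edge-disjoint routing of fixed-point-free involutions on the hypercube, which is the result of \cite{Sprague1994}. I would invoke it directly, but I would also sketch a self-contained inductive argument.

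For the induction, split $Q_d$ along its top dimension into two copies $Q^0,Q^1$ of $Q_{d-1}$, joined by the $n/2$ ``column'' edges $\{x,x\oplus 2^{d-1}\}$. Requested pairs lying inside one half are handed to that half. Each cross pair $(u,v)$, with $u\in Q^0$ and $v\in Q^1$, must use one column edge, either at $u$'s column or at $v$'s column. This turns it into a request $(u\oplus2^{d-1},v)$ inside $Q^1$ or $(u,v\oplus2^{d-1})$ inside $Q^0$. To make the column edges disjoint, build an auxiliary multigraph on the columns with one edge per cross pair, joining the columns of its two endpoints. Since each column contains two nodes and each node lies in exactly one pair, this graph has maximum degree $2$, so it is a union of paths and cycles. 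Orienting each path and each cycle consistently and assigning each edge to its head column uses every column edge at most once, because every path or cycle component has no more edges than vertices.

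The main obstacle is that the induced requests inside each half are no longer a matching. A node $y$ may keep its own request and also serve as the proxy endpoint $v\oplus 2^{d-1}$ of a cross pair, giving it request-degree $2$. Recursing naively would let degrees grow with depth. I would therefore strengthen the induction hypothesis to request multigraphs on $Q_m$ in which every node has degree at most $2$, that is, unions of paths and cycles. Orienting these gives partial permutations with in- and out-degree at most one. I would rerun the column-assignment argument on this class, now with up to four endpoint-incidences per column. The aim is to show that a suitable orientation (an Euler-type alternating choice, as in Beneš/Waksman looping) keeps both halves inside the class and uses each column edge at most once. I expect this step to need care, possibly through the extra slack of the ``two-layer'' capacity noted earlier or through parity/looping arguments. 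If it cannot be closed, I would fall back on citing \cite{Sprague1994} for exactly this step, since it establishes the required edge-disjoint involution routing on $Q_d$. The base case $d=1$ is trivial, since the unique request is the single edge.

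Finally, I would note that the paths obtained live entirely in the hypercube spanning subgraph, so each physical Bell pair of that layer is used at most once. The remaining edges, namely the extra $z\pm2^k$ links not in $Q_d$, are not needed, which matches the statement.
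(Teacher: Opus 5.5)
Your reduction to edge-disjoint path routing in the hypercube subgraph matches the paper's. The problem is the lemma you then set out to prove: that every perfect matching of $V(Q_d)$ admits pairwise edge-disjoint connecting paths for \emph{every} $d$. This is false for even $d$, so neither your induction nor your fallback citation can establish it.

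Take $d=2$, with hypercube edges $0$--$1$, $2$--$3$, $0$--$2$, $1$--$3$. The request $\{(0,3),(1,2)\}$ pairs antipodes, so each path has length at least $2$. Since $Q_2$ has only four edges, both paths have length exactly $2$. But each of the two shortest $0$--$3$ paths shares an edge with each of the two shortest $1$--$2$ paths.

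Your inductive step fails on exactly this instance. Both pairs cross the split, so the column multigraph is a double edge. Either consistent orientation then sends both induced requests into the same copy of $Q_1$, which has a single edge.

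Your strengthened hypothesis (request multigraphs of maximum degree $2$) is therefore false already at the base case. It also fails in general by counting: a doubled antipodal matching on $Q_m$ needs $m\,2^m$ edge uses, while $Q_m$ has only $m\,2^{m-1}$ edges. Note too that partial permutations and Beneš-style looping belong to the directed, arc-disjoint model. There an edge may carry one path in each direction, which would consume a single Bell pair twice, and that is not the resource constraint of the theorem.

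What you are missing is the parity restriction and the way the paper works around it. The paper invokes Sprague--Tamaki only for odd $d$, and in its \emph{automorphic} form: the path from $v$ to $u$ is the reverse of the path from $u$ to $v$. Both arcs of every used edge then belong to the same request, so arc-disjointness becomes undirected edge-disjointness within a single hypercube layer. For even $d$ the paper does not claim one layer suffices. It allows a second copy of the resource and appeals to the 2-rearrangeability of the hypercube, which keeps the total at $O(n\log n)$.

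To repair your argument, state the single-layer result for odd $d$ with this arc-to-edge conversion made explicit, and treat even $d$ with two layers. Read literally with one hypercube layer and arbitrary $d$, the statement already fails at $d=2$.
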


\begin{proof}[Proof sketch]
The proof follows directly from the equivalence between simultaneous Bell-pair generation and edge-disjoint path routing. As established, the $d$-dimensional hypercube is a spanning subgraph of the nested architecture: any logical bit-flip operation $x \oplus 2^k$ mathematically corresponds to either $x + 2^k$ or $x - 2^k \pmod{n}$, both of which are native physical links in the nested topology. By restricting our routing to this hypercube subgraph, each switching request defines a fixed-point free involution on its vertices. For odd $d$, classical results \cite{Sprague1994} guarantee the existence of automorphic edge-disjoint paths realizing this involution. Performing entanglement swapping along these paths yields the desired Bell pairs while consuming each resource edge at most once. If one allows for two copies of the resource state—ensuring the total resource remains within the $O(n\log n)$ scaling—then arbitrary pairing and permutation requests can be supported for any $d$. This follows from the fact that the hypercube is 2-rearrangeable, meaning that any general permutation can be decomposed into two sub-permutations, each routable by edge-disjoint paths \cite{Gu1997, Choi1993, Shen1994}.
\end{proof}

We note again that the nested architecture, with a node degree of $2d-1$, physically provides nearly the same total resource as the two $d$-regular hypercube layers required for universality.

While constructive algorithms for the specific decomposition exist, e.g. \cite{Gu1997}, they are hard to implement.  We employ a heuristic procedure to further evaluate the switch performance. In the next section, we show via numerical simulations that even a simple algorithm based on multi-path finding applied to the nested switch construction achieves strong performance, essentially universality, and exhibits good resilience under node failures.

\section{Performance analysis} \label{sec:performance}
To obtain a qualitative robustness benchmark, we evaluate a multi-path routing protocol on our nested switch resource, using its $d$-dimensional hypercube spanning subgraph. In each trial, we remove $x$ nodes uniformly at random together with all incident edges. We evaluate performance under maximal load ($(n-x)/2$ random pairs) using a $k$-shortest path heuristic algorithm (where we set $k=20$) on the nested switch. For each request, the algorithm selects the first available path from the $k$ shortest options that satisfies the link capacity constraint $R$ (Bell pairs per link). This multi-path strategy allows the network to overcome local congestion and failures via detours, yielding a realistic lower bound on achievable connectivity. For a given per-edge capacity $R$, we define the served fraction $S_R = {\#\text{served pairs}}/{\#\text{requested pairs}},$ where a request consists of a simultaneous perfect matching on the surviving nodes.

We remark that our physical nested switch construction nearly corresponds to a hypercube with $R=2$.

\subsection{Resource overhead and robustness under losses}
We start analyzing the resource overhead of the protocol in the edge-disjoint setting by measuring the average number of link-level Bell pairs consumed per served or delivered end-to-end pair, see Fig. \ref{fig:fig1}. This quantity is equal to the mean path length of accepted connections. As the number of failed nodes increases, the mean cost changes only mildly, remaining within a narrow band across the explored failure range. This indicates that node loss has little effect on the resource cost of the connections that remain feasible under the routing rule. 

Note that this metric is directly related to the physical quality of the switch, as the number of hops $L$ dictates the number of required entanglement swapping operations. Assuming imperfect generation and local operations, the end-to-end fidelity scales as $F_{\text{total}} \propto F_{\text{swap}}^{L-1}$, where as stressed before $ L=(\log(n))/2$. Consequently, the observation that the path length remains close to the theoretical minimum, even under substantial node failure, indicates that the switch preserves high achievable fidelities, exhibiting strong robustness. See below for a more concrete fidelity analysis.

\begin{figure}
    {\includegraphics[width=0.95\columnwidth]{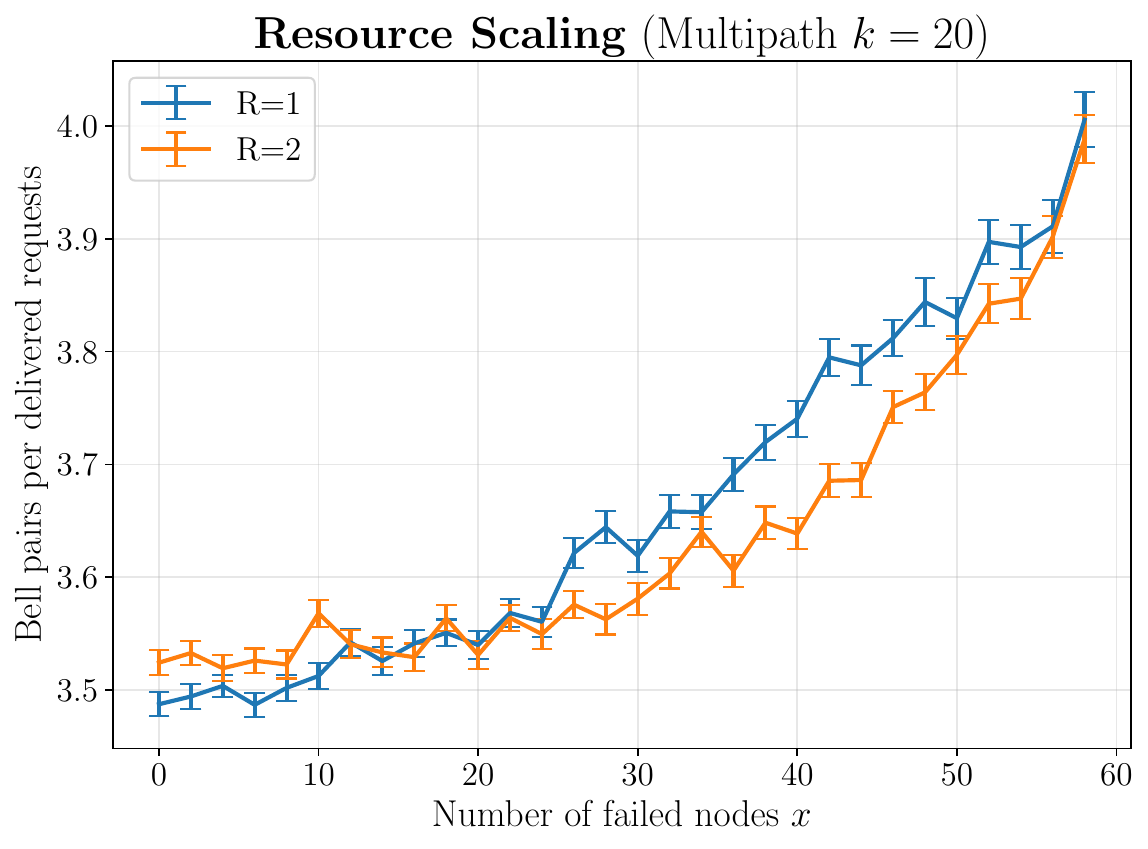}}
    \caption{\label{fig:fig1} Resource scaling. Average number of Bell pairs consumed per delivered connection in a simultaneous target configuration as a function of the number of failed nodes for a network of $128$ nodes. Each data point is averaged over $M$ independent Monte Carlo trials, where each trial samples a random failure pattern and a random perfect matching on the surviving nodes. Error bars denote the standard error deviation. 
}
\end{figure}

Fig.~\ref{fig:fig2} shows the mean fraction of request pairs served by the routing protocol as a function of the number of failed or lost nodes, for several per-edge capacities $R$. Even in the absence of failures ($x=0$), the basic protocol with $R=1$ serves only about $90\%$ of simultaneous requests due to path collisions induced by the fixed routing constraints. As nodes fail, the performance only decreases smoothly. Increasing the per-edge capacity has a strong stabilizing effect: already for $R=2$ (which, as mentioned before, nearly correspond to our nested construction), almost all requests, i.e., pairwise universality, are served across the entire failure range, i.e., linear resource overhead, absorbs random node loss effectively.

\begin{figure}
    {\includegraphics[width=0.95\columnwidth]{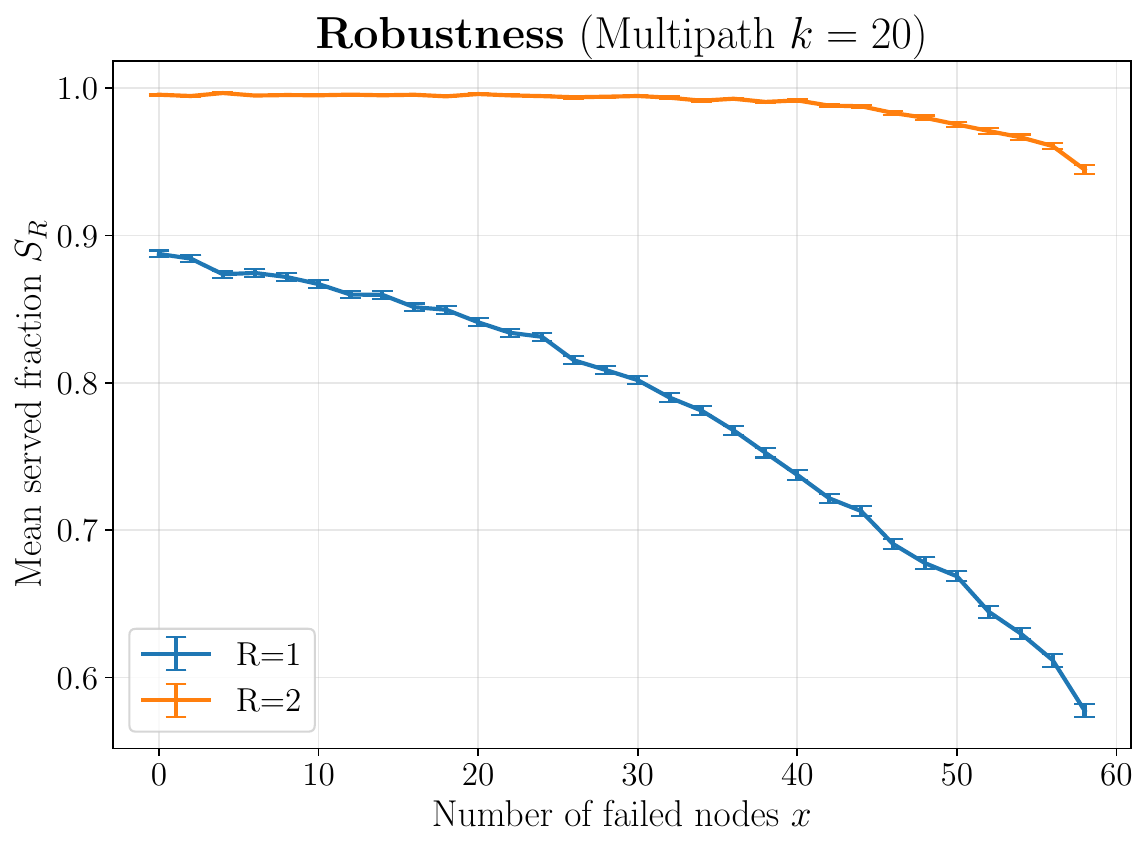}}
    \caption{\label{fig:fig2}  Robustness of the nested quantum switch against node failures. The mean fraction of served requests $S_R$ is shown for a network of $n=128$ nodes as a function of the number of failed nodes $x$. Routing is performed using a greedy $k$-shortest path algorithm with $k=20$, for edge capacities of $R=1$ (blue) and $R=2$ (orange) Bell pairs. We remark that the $R=2$ case corresponds to the actual behavior of the nested construction. Each data point is averaged over $M=500$ independent Monte Carlo trials. Error bars represent the standard error deviation. Since the served fraction is strictly bounded in $[0,1]$, concentration inequalities, e.g., Hoeffding's inequality, imply that the estimation error scales as $O(1/\sqrt{M})$. For $M=500$, the resulting statistical uncertainty is approximately $2\%$.
}
\end{figure}

\subsection{Scalability}
Complementing the previous performance metrics, we examine the distribution of network traffic to verify the absence of structural bottlenecks. Fig. \ref{fig:fig3} shows the probability distribution of edge loads under a maximal switching request ($n/2$ simultaneous pairs), obtained via the multi-path routing protocol. The histogram confirms that the network suffers from no structural bottlenecks. Under full saturation, the traffic is uniformly distributed, with no link requiring more than $R=2$ Bell pairs independently of the network size. This uniform spreading of traffic confirms that the nested switch topology naturally balances the load without creating hotspots or localized congestion points, in contrast to central node approaches. 

\begin{figure}
    {\includegraphics[width=0.95\columnwidth]{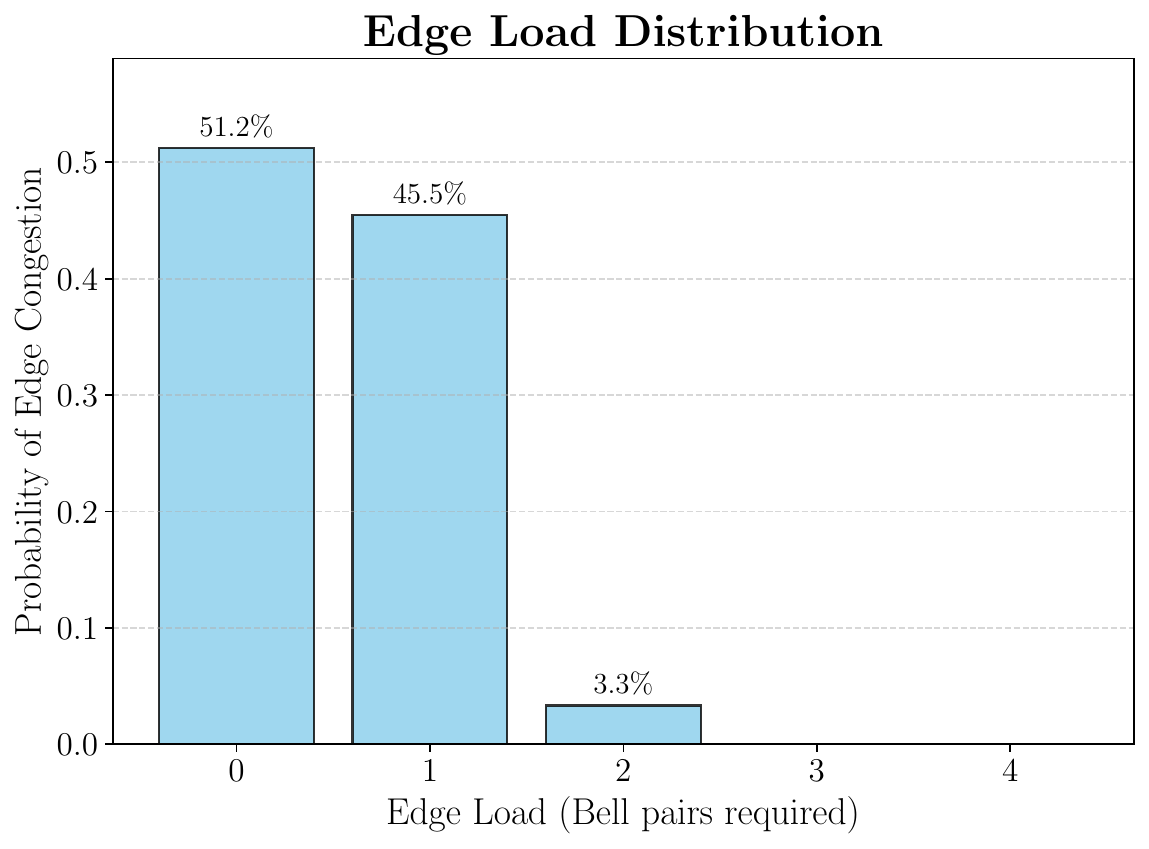}}
    \caption{\label{fig:fig3}  Edge load distribution. Probability distribution of edge loads under a maximal switching request ($n/2$ simultaneous pairs) obtained via the multi-path routing protocol for $n=128$. 
}
\end{figure}

Finally, we address the scalability of the proposed architecture by analyzing the resource overhead required to resolve conflicts as the network size grows. While theoretical results guarantee 2-rearrangeability for hypercubes, it is crucial to verify that the congestion does not diverge when employing realistic routing protocols. Fig. \ref{fig:fig4} displays the maximum edge load, defined as the maximum number of Bell pairs required on the most congested link, under full saturation traffic ($n/2$ pairs) for system sizes ranging from $n=2^3$ to $n=2^{10}$. We observe that the worst-case congestion saturates at a small constant value ($R_{\text{req}} \le 4$). The average maximum load increases gradually with the network size but remains strictly bounded by this worst-case limit. The slight increase to 4 in the worst case is due to the non-optimized heuristic algorithm employed. Crucially, the absence of significant divergence confirms that the nested quantum switch maintains its efficient $O(n \log n)$ total resource scaling, making it viable for large-scale quantum internet implementations.

\begin{figure}
    {\includegraphics[width=0.95\columnwidth]{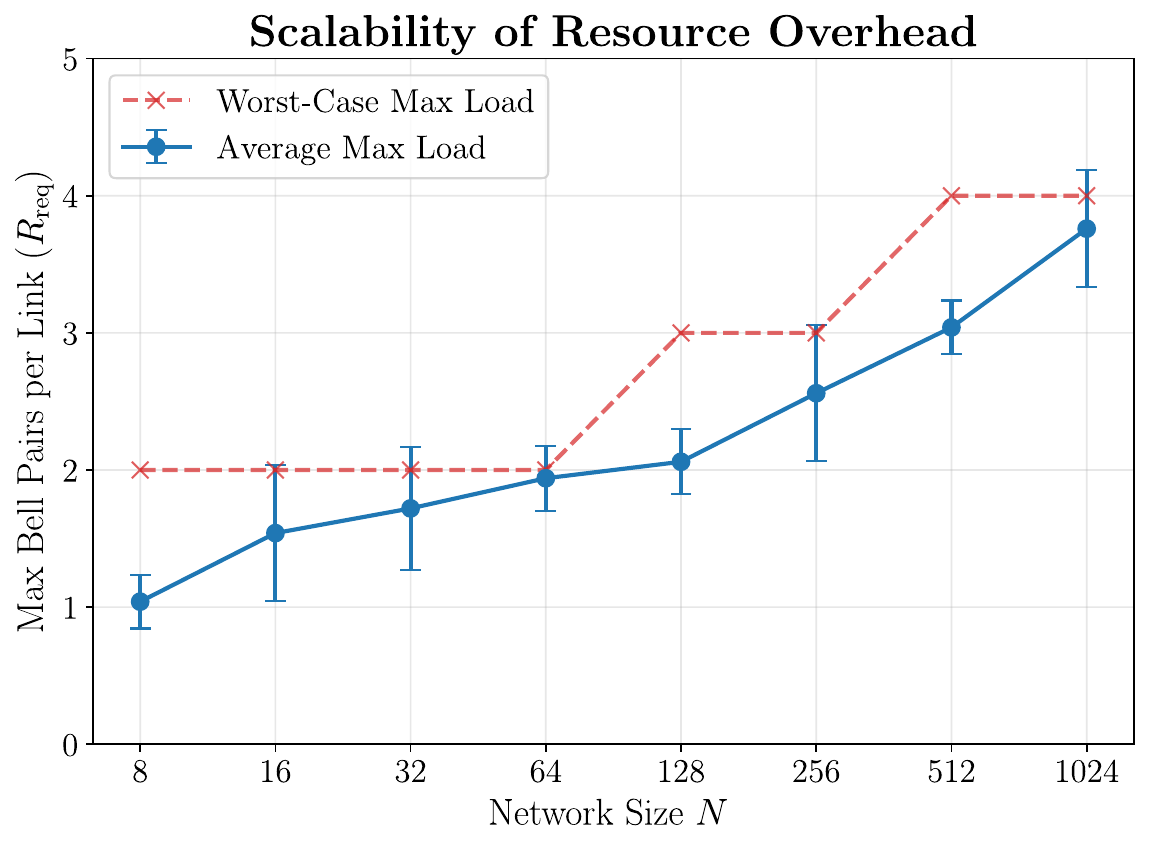}}
    \caption{\label{fig:fig4}  Scalability of resource overhead. The maximum number of Bell pairs per link ($R_{\text{req}}$) required to satisfy a full permutation request is plotted against the network size $n$. The blue dots represent the average maximum load over different random permutations, while the red crosses indicate the absolute worst-case congestion observed. 
}
\end{figure}

\subsection{Fidelity analysis}
To evaluate the physical viability of the nested switch, we provide some analytical insight on the end-to-end fidelity analysis by modeling the generation of an end-to-end connection across $L$ hops under realistic noise. We assume the initial shared resources are imperfect Bell pairs, modeled as Werner states \cite{nielsen_chuang_2010} with depolarizing parameter $p_0$:
\begin{equation}
\rho_0 = p_0 |\Phi^+\rangle\langle\Phi^+| + \frac{1-p_0}{4} \mathbb{I}_{4}.
\end{equation}
The fidelity of each elementary link is $F_0 = (3p_0 + 1)/4$. To establish a connection between the source and target nodes, $L-1$ entanglement swapping operations must be performed along the routed path. We consider two primary sources of operational noise, imperfect Swapping, i.e., each Bell State Measurement is modeled as an ideal projection followed by a depolarizing channel with parameter $p_{\text{swap}}$; and memory, i.e., storing the qubits while the classical routing information propagates induces decoherence. We model this as a local depolarizing channel with parameter $p_{\text{mem}} = e^{-t/T}$, where $t$ is the storage time and $T$ is the effective coherence time. 

When swapping two links characterized by parameters $p_1$ and $p_2$, the resulting state remains a Werner state with an updated parameter $p_{\text{new}} = p_1 p_2 p_{\text{swap}} p_{\text{mem}}^2$. Iterating this recurrence relation over the $L$ edges of the path yields the depolarizing parameter for the final end-to-end state:
\begin{equation}
p_L = p_0^L \left( p_{\text{swap}} p_{\text{mem}}^2 \right)^{L-1}.
\end{equation}
The corresponding end-to-end fidelity is therefore:
\begin{equation}
F_L = \frac{1}{4} \left( 1 + 3 p_0^L \left( p_{\text{swap}} p_{\text{mem}}^2 \right)^{L-1} \right).
\end{equation}
As stressed before, the nested architecture maintains a mean path length of around $L \approx (\log n)/2$, even under substantial node failures (see Fig. \ref{fig:fig1}). Substituting this logarithmic depth into the fidelity equation demonstrates that the end-to-end fidelity decays only \textit{polynomially} with the network size $n$. This guarantees that high-fidelity connections can be established across the network, even with losses,  without requiring complex purification protocols at intermediate nodes.

\section{Graph-state variant} \label{sec:graphstate}
We briefly discuss a conceptual variant of the nested quantum switch in which, instead of treating Bell pairs as independent routing resources, all qubits stored at each node are merged into a single qubit, forming a graph state with an edge between any two nodes previously connected by a Bell state.k. This construction has total qubit overhead $O(n)$ and may be viewed as a global entanglement resource from which Bell pairs are extracted by local measurements. The goal of this variant is not to realize arbitrary connection patterns, but to explore how far a highly merged resource can be pushed.

In the nested merged variant, to generate an end-to-end Bell pair along a path, intermediate vertices and their neighboring ones must be measured out in order to isolate the desired entanglement.  Pauli $Z$ measurements are used to isolate the path and Pauli $X/Y$ measurements to propagate the entanglement through it. A single routed connection, therefore, consumes not only the qubits along the path, but also effectively blocks a neighborhood of size $O(d)$ around each intermediate node. Since a typical path length is $O(d)$, the total excluded region per connection scales as $O(d^2) = O(\log^2 n)$. As a consequence, even under ideal scheduling, the number of Bell pairs that can be generated simultaneously from the global graph-state resource scales as  $O\!\left({n}/{\log^2 n}\right).$

We further evaluate the performance of this approach using a similar path-finding algorithm as Sec. \ref{sec:performance}, accounting for the effective qubit loss after each connection is generated. In this variant, establishing a connection necessitates measuring out all qubits along the chosen path and all immediate neighbors, thereby removing these qubits from the available resource. 

\begin{figure}
    {\includegraphics[width=0.95\columnwidth]{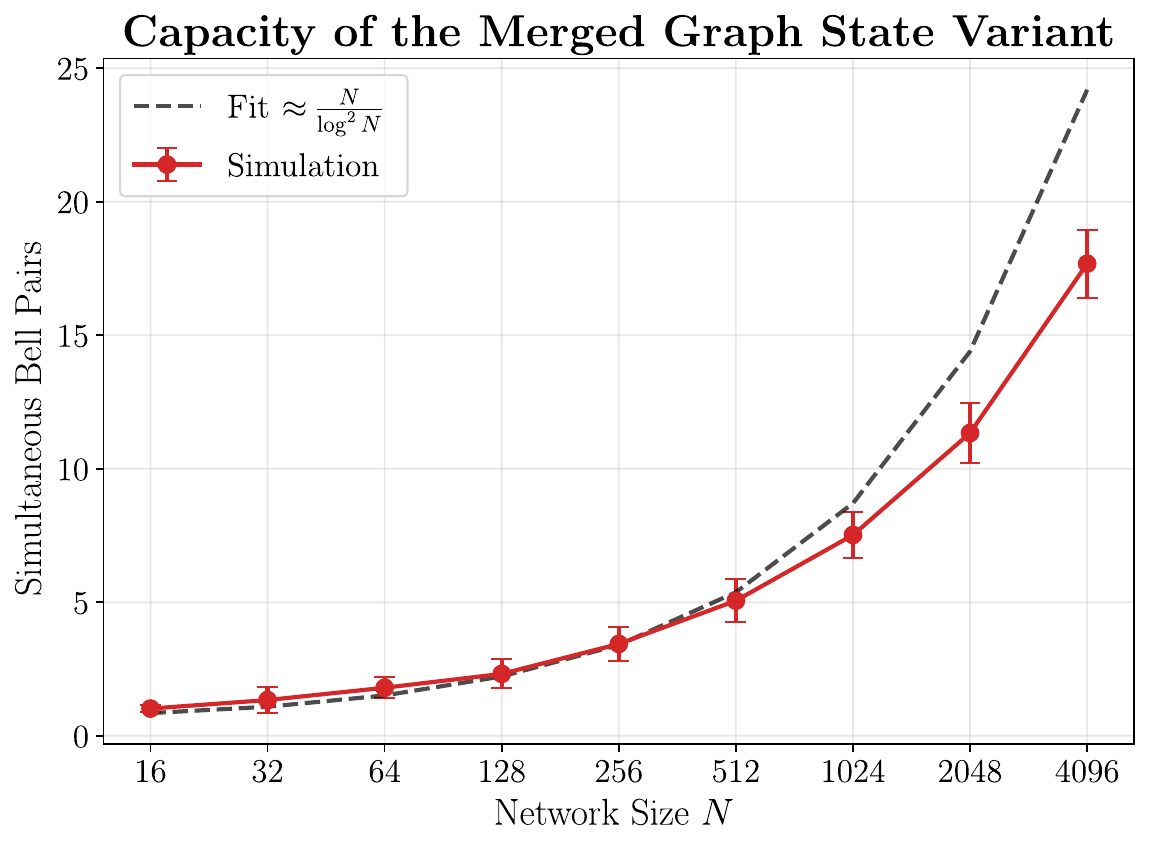}}
    \caption{\label{fig:fig5} Capacity of the merged graph-state variant. The maximum number of simultaneous Bell pairs $S$ that can be extracted from a unified $n$-qubit nested graph state. Red dots indicate results from a path-finding simulation where each connection effectively "consumes" its path and neighboring qubits. The dashed line represents the theoretical fit $S \approx n / \log^2 n$. 
}
\end{figure}

Fig. \ref{fig:fig5} displays the number of simultaneous Bell pairs obtained via our path-finding simulation. While the data exhibits a trend compatible with the theoretical $O(n/\log^2 n)$ scaling, we observe a growing downward deviation for larger networks. This reflects the limitations of our non-optimized routing algorithm, which increasingly struggles with path collisions induced by the high density of excluded regions in the merged-resource setting.

It is instructive to contrast this behavior with standard 2D cluster states  \cite{Freund2024,Asavanant2019-ks}, where the number of simultaneous long-range connections is fundamentally bottlenecked by the lattice geometry and scales as $O(\sqrt{n})$. Because $\sqrt{n}$ grows faster than $n/\log^2 n$ for small $n$, a 2D lattice actually provides a higher absolute capacity for finite-size networks within near-term reach. However, the $O(n/\log^2 n)$ throughput scaling of the merged  variant is asymptotically superior, eventually surpassing the $O(\sqrt{n})$ bound at large network sizes. Crucially the unmerged and independent nested switch model introduced in Sec. \ref{sec:model} removes this penalty entirely, restoring the full $O(n)$ parallel capacity and definitively overcoming the dimensionality limitations of standard 2D architectures.


The resource consumption directly relates to the end-to-end fidelity of the generated connections. Unlike the independent Bell-pair architecture, where noise scales with the linear path length $O(\log n)$, extracting a target pair from a merged graph state requires local projective measurements on all qubits within the excluded neighborhood. Assuming the global graph state is subject to local depolarizing noise, or equivalently, that the local measurement operations  are imperfect, we can assign an operational fidelity $F_0$ to each node. The total number of measured qubits required to isolate a single path is proportional to $(\log n)^2$, where logical errors accumulate multiplicatively with each measured vertex in the neighborhood \cite{MorRuiz2024}. Consequently, the final fidelity of the extracted Bell pair scales exponentially with the excluded region as
\begin{equation}
    F_{\text{end}} \propto F_0^{O((\log n)^2)}.
\end{equation}
While this polynomial scaling exhibits a higher exponent than the $O(\log n)$ path length of the independent Bell-pair switch, this variant offers significant physical advantages. Because the graph state can act as a pre-shared resource, it can be probabilistically generated and purified offline prior to any connectivity request. When a request is made, the connections are extracted on-demand using only fast single-qubit measurements, overcoming the accumulation of quantum memory errors and avoiding the use of noisy two-qubit swapping operations, potentially allowing for higher fidelities.

\section{Conclusions} \label{sec:discussion}
In this work, we have introduced the \textit{nested quantum switch}, a fully distributed network architecture that enables universal bipartite simultaneous connectivity without relying on a central hub. By structuring pre-shared entanglement according to a projected hypercube geometry, we achieve a resource complexity of $O(n \log n)$ total Bell pairs, requiring only logarithmic memory per node. This offers a scalable alternative to existing distributed schemes that demand $O(n^2)$ resources, while avoiding the single-point-of-failure risks associated with centralized star topologies.

We have formally shown that this architecture supports arbitrary simultaneous pairing requests, proving theoretical universality. Our numerical analysis demonstrates how simple algorithms can reach such performance and that the nested switch exhibits practical universality and inherent resilience to network noise. Under random node failures, the system does not suffer a catastrophic breakdown but instead degrades gradually, allowing a large fraction of connections to be served even when significant portions of the network are lost. Furthermore, the entanglement cost per generated pair remains stable under these failure conditions, indicating that the topological advantages of the nested switch persist in realistic, imperfect scenarios.

While we have discussed the usage of such a universal quantum switch in a network setting, we point out that it can also be utilized in other contexts. This includes, e.g., the realization of remote two-qubit gates in quantum computational architectures \cite{Dr2025}.

Finally, we discuss and prove a low-memory variant based on graph states, which trades full universality for extreme memory efficiency, i.e., $O(1)$ qubits per node. Collectively, these results suggest that this hierarchical nested geometry provides a robust and efficient alternative for the next generation of scalable quantum networks. Future work may address the physical implementation of such connectivities and the integration of purification protocols to tackle link-level noise.

\section*{Acknowledgments}
This research was funded in whole or in part by the Austrian Science Fund (FWF) 10.55776/P36009,  10.55776/P36010, 10.55776/PAT1710825 and 10.55776/COE1. For open access purposes, the author has applied a CC BY public copyright license to any author accepted manuscript version arising from this submission. Finanziert von der Europ\"aischen Union - NextGenerationEU.

\bibliographystyle{apsrev4-2}
\bibliography{Qswitch.bib}

@article{Vardoyan2023,
  title = {On the Capacity Region of Bipartite and Tripartite Entanglement Switching},
  volume = {8},
  ISSN = {2376-3647},
  url = {http://dx.doi.org/10.1145/3571809},
  DOI = {10.1145/3571809},
  number = {1–2},
  journal = {ACM Transactions on Modeling and Performance Evaluation of Computing Systems},
  publisher = {Association for Computing Machinery (ACM)},
  author = {Vardoyan,  Gayane and Nain,  Philippe and Guha,  Saikat and Towsley,  Don},
  year = {2023},
  month = mar,
  pages = {1–18}
}

@article{Hein2004,
  title = {Multiparty entanglement in graph states},
  author = {Hein, M. and Eisert, J. and Briegel, H. J.},
  journal = {Phys. Rev. A},
  volume = {69},
  issue = {6},
  pages = {062311},
  numpages = {20},
  year = {2004},
  month = {Jun},
  publisher = {American Physical Society},
  doi = {10.1103/PhysRevA.69.062311},
  url = {https://link.aps.org/doi/10.1103/PhysRevA.69.062311}
}

@article{Hein2006,
author = {Hein, M. and Dür, W. and Eisert, Jens and Raussendorf, Robert and Nest, M. and Briegel, H.},
year = {2006},
month = {03},
journal={in Quantum Computers, Algorithms and Chaos,
Proceedings of the International School of Physics “Enrico Fermi,”
Vol. 162, Varenna, 2005, edited by G. Casati, D. L. Shepelyansky,
P. Zoller, and G. Benenti (IOS Press, Amsterdam},
title = {Entanglement in Graph States and its Applications},
volume = {162},
pages={115-218},
  url = {https://doi.org/10.3254/978-1-61499-018-5-115},
  DOI = {10.3254/978-1-61499-018-5-115}
}

@article{MiguelRamiro2021,
  title = {Genuine quantum networks with superposed tasks and addressing},
  volume = {7},
  ISSN = {2056-6387},
  url = {http://dx.doi.org/10.1038/s41534-021-00472-5},
  DOI = {10.1038/s41534-021-00472-5},
  number = {1},
  journal = {npj Quantum Information},
  publisher = {Springer Science and Business Media LLC},
  author = {Miguel-Ramiro,  J. and Pirker,  A. and D\"{u}r,  W.},
  year = {2021},
  pages ={135},
  month = sep 
}

@article{Vardoyan2019,
  title = {On the Stochastic Analysis of a Quantum Entanglement Switch},
  volume = {47},
  ISSN = {0163-5999},
  url = {http://dx.doi.org/10.1145/3374888.3374899},
  DOI = {10.1145/3374888.3374899},
  number = {2},
  journal = {ACM SIGMETRICS Performance Evaluation Review},
  publisher = {Association for Computing Machinery (ACM)},
  author = {Vardoyan,  Gayane and Guha,  Saikat and Nain,  Philippe and Towsley,  Don},
  year = {2019},
  month = dec,
  pages = {27–29}
}

@article{Vardoyan2021,
  title = {On the exact analysis of an idealized quantum switch},
  volume = {144},
  ISSN = {0166-5316},
  url = {http://dx.doi.org/10.1016/j.peva.2020.102141},
  DOI = {10.1016/j.peva.2020.102141},
  journal = {Performance Evaluation},
  publisher = {Elsevier BV},
  author = {Vardoyan,  Gayane and Guha,  Saikat and Nain,  Philippe and Towsley,  Don},
  year = {2020},
  month = dec,
  pages = {102141}
}

@misc{Dai2021,
  doi = {10.48550/ARXIV.2110.04116},
  url = {https://arxiv.org/abs/2110.04116},
  author = {Dai,  Wenhan and Rinaldi,  Anthony and Towsley,  Don},
  title = {Entanglement Swapping in Quantum Switches: Protocol Design and Stability Analysis},
  publisher = {arXiv},
  year = {2021},
  copyright = {Creative Commons Attribution Non Commercial No Derivatives 4.0 International}
}

@misc{MR2025,
  doi = {10.48550/ARXIV.2508.03806},
  url = {https://arxiv.org/abs/2508.03806},
  author = {Miguel-Ramiro,  Jorge and Illiano,  Jessica and Mazza,  Francesco and Pirker,  Alexander and Freund,  Julia and Cacciapuoti,  Angela Sara and Caleffi,  Marcello and D\"{u}r,  Wolfgang},
  title = {QPing: a Quantum Ping Primitive for Quantum Networks},
  publisher = {arXiv},
  year = {2025},
  copyright = {arXiv.org perpetual,  non-exclusive license}
}

@misc{Mazza2025,
  doi = {10.48550/ARXIV.2510.15776},
  url = {https://arxiv.org/abs/2510.15776},
  author = {Mazza,  Francesco and Miguel-Ramiro,  Jorge and Illiano,  Jessica and Pirker,  Alexander and Caleffi,  Marcello and Cacciapuoti,  Angela Sara and D\"{u}r,  Wolfgang},
  title = {Flexible Qubit Allocation of Network Resource States},
  publisher = {arXiv},
  year = {2025},
  copyright = {Creative Commons Attribution Non Commercial No Derivatives 4.0 International}
}

@article{MorRuiz2024,
  title = {Influence of Noise in Entanglement-Based Quantum Networks},
  volume = {42},
  ISSN = {1558-0008},
  url = {http://dx.doi.org/10.1109/JSAC.2024.3380089},
  DOI = {10.1109/jsac.2024.3380089},
  number = {7},
  journal = {IEEE Journal on Selected Areas in Communications},
  publisher = {Institute of Electrical and Electronics Engineers (IEEE)},
  author = {Mor-Ruiz,  Maria Flors and D\"{u}r,  Wolfgang},
  year = {2024},
  month = jul,
  pages = {1793–1807}
}

@inproceedings{Panigrahy2023,
  title = {On the Capacity Region of a Quantum Switch with Entanglement Purification},
  url = {http://dx.doi.org/10.1109/INFOCOM53939.2023.10229003},
  DOI = {10.1109/infocom53939.2023.10229003},
  booktitle = {IEEE INFOCOM 2023 - IEEE Conference on Computer Communications},
  publisher = {IEEE},
  author = {Panigrahy,  Nitish K. and Vasantam,  Thirupathaiah and Towsley,  Don and Tassiulas,  Leandros},
  year = {2023},
  month = may,
  pages = {1–10}
}

@inproceedings{Kumar2023,
  title = {Optimal Entanglement Distillation Policies for Quantum Switches},
  url = {http://dx.doi.org/10.1109/QCE57702.2023.00135},
  DOI = {10.1109/qce57702.2023.00135},
  booktitle = {2023 IEEE International Conference on Quantum Computing and Engineering (QCE)},
  publisher = {IEEE},
  author = {Kumar,  Vivek and Chandra,  Nitish K. and Seshadreesan,  Kaushik P. and Scheller-Wolf,  Alan and Tayur,  Sridhar},
  year = {2023},
  month = sep,
  pages = {1198–1204}
}

@article{Avis2023,
  title = {Analysis of multipartite entanglement distribution using a central quantum-network node},
  author = {Avis, Guus and Rozp\ifmmode \mbox{\k{e}}\else \k{e}\fi{}dek, Filip and Wehner, Stephanie},
  journal = {Phys. Rev. A},
  volume = {107},
  issue = {1},
  pages = {012609},
  numpages = {36},
  year = {2023},
  month = {Jan},
  publisher = {American Physical Society},
  doi = {10.1103/PhysRevA.107.012609},
  url = {https://link.aps.org/doi/10.1103/PhysRevA.107.012609}
}

@inproceedings{Gauthier2023,
  series = {QuNet ’23},
  title = {A Control Architecture for Entanglement Generation Switches in Quantum Networks},
  url = {http://dx.doi.org/10.1145/3610251.3610552},
  DOI = {10.1145/3610251.3610552},
  booktitle = {Proceedings of the 1st Workshop on Quantum Networks and Distributed Quantum Computing},
  publisher = {ACM},
  author = {Gauthier,  Scarlett and Vardoyan,  Gayane and Wehner,  Stephanie},
  year = {2023},
  month = sep,
  pages = {38–44},
  collection = {QuNet ’23}
}

@article{Lee2022,
  title = {A quantum router architecture for high-fidelity entanglement flows in quantum networks},
  volume = {8},
  ISSN = {2056-6387},
  url = {http://dx.doi.org/10.1038/s41534-022-00582-8},
  DOI = {10.1038/s41534-022-00582-8},
  number = {1},
  journal = {npj Quantum Information},
  publisher = {Springer Science and Business Media LLC},
  author = {Lee,  Yuan and Bersin,  Eric and Dahlberg,  Axel and Wehner,  Stephanie and Englund,  Dirk},
  year = {2022},
  pages = {75},
  month = jun 
}

@article{Pant2019,
  title = {Routing entanglement in the quantum internet},
  volume = {5},
  ISSN = {2056-6387},
  url = {http://dx.doi.org/10.1038/s41534-019-0139-x},
  DOI = {10.1038/s41534-019-0139-x},
  number = {1},
  journal = {npj Quantum Information},
  publisher = {Springer Science and Business Media LLC},
  author = {Pant,  Mihir and Krovi,  Hari and Towsley,  Don and Tassiulas,  Leandros and Jiang,  Liang and Basu,  Prithwish and Englund,  Dirk and Guha,  Saikat},
  year = {2019},
pages={25},
  month = mar 
}

@article{Sutcliffe2023,
  title = {Multiuser Entanglement Distribution in Quantum Networks Using Multipath Routing},
  volume = {4},
  ISSN = {2689-1808},
  url = {http://dx.doi.org/10.1109/TQE.2023.3329714},
  DOI = {10.1109/tqe.2023.3329714},
  journal = {IEEE Transactions on Quantum Engineering},
  publisher = {Institute of Electrical and Electronics Engineers (IEEE)},
  author = {Sutcliffe,  Evan and Beghelli,  Alejandra},
  year = {2023},
  pages = {1–15}
}

@article{Gyongyosi2019,
  title = {Opportunistic Entanglement Distribution for the Quantum Internet},
  volume = {9},
  ISSN = {2045-2322},
  url = {http://dx.doi.org/10.1038/s41598-019-38495-w},
  DOI = {10.1038/s41598-019-38495-w},
  number = {1},
  journal = {Scientific Reports},
  publisher = {Springer Science and Business Media LLC},
  author = {Gyongyosi,  Laszlo and Imre,  Sandor},
  year = {2019},
pages={2219},
  month = feb 
}

@misc{Abane2024,
  doi = {10.48550/ARXIV.2408.01234},
  url = {https://arxiv.org/abs/2408.01234},
  author = {Abane,  Amar and Cubeddu,  Michael and Mai,  Van Sy and Battou,  Abdella},
  title = {Entanglement Routing in Quantum Networks: A Comprehensive Survey},
  publisher = {arXiv},
  year = {2024},
  copyright = {Creative Commons Attribution 4.0 International}
}

@article{Shi2024,
  title = {Concurrent Entanglement Routing for Quantum Networks: Model and Designs},
  volume = {32},
  ISSN = {1558-2566},
  url = {http://dx.doi.org/10.1109/TNET.2023.3343748},
  DOI = {10.1109/tnet.2023.3343748},
  number = {3},
  journal = {IEEE/ACM Transactions on Networking},
  publisher = {Institute of Electrical and Electronics Engineers (IEEE)},
  author = {Shi,  Shouqian and Zhang,  Xiaoxue and Qian,  Chen},
  year = {2024},
  month = jun,
  pages = {2205–2220}
}

@article{Pirker2018,
  title = {Modular architectures for quantum networks},
  volume = {20},
  ISSN = {1367-2630},
  url = {http://dx.doi.org/10.1088/1367-2630/aac2aa},
  DOI = {10.1088/1367-2630/aac2aa},
  number = {5},
  journal = {New Journal of Physics},
  publisher = {IOP Publishing},
  author = {Pirker,  A and Walln\"{o}fer,  J and D\"{u}r,  W},
  year = {2018},
  month = may,
  pages = {053054}
}

@article{Pirker2019,
  title = {A quantum network stack and protocols for reliable entanglement-based networks},
  volume = {21},
  ISSN = {1367-2630},
  url = {http://dx.doi.org/10.1088/1367-2630/ab05f7},
  DOI = {10.1088/1367-2630/ab05f7},
  number = {3},
  journal = {New Journal of Physics},
  publisher = {IOP Publishing},
  author = {Pirker,  A and D\"{u}r,  W},
  year = {2019},
  month = mar,
  pages = {033003}
}

@article{MiguelRamiro2023,
  title = {Optimized Quantum Networks},
  volume = {7},
  ISSN = {2521-327X},
  url = {http://dx.doi.org/10.22331/q-2023-02-09-919},
  DOI = {10.22331/q-2023-02-09-919},
  journal = {Quantum},
  publisher = {Verein zur Forderung des Open Access Publizierens in den Quantenwissenschaften},
  author = {Miguel-Ramiro,  Jorge and Pirker,  Alexander and D\"{u}r,  Wolfgang},
  year = {2023},
  month = feb,
  pages = {919}
}

@article{Hahn2019,
  title = {Quantum network routing and local complementation},
  volume = {5},
  ISSN = {2056-6387},
  url = {http://dx.doi.org/10.1038/s41534-019-0191-6},
  DOI = {10.1038/s41534-019-0191-6},
  number = {1},
  journal = {npj Quantum Information},
  publisher = {Springer Science and Business Media LLC},
  author = {Hahn,  F. and Pappa,  A. and Eisert,  J.},
  year = {2019},
pages ={76},
  month = sep 
}

@article{Meignant2019,
  title = {Distributing graph states over arbitrary quantum networks},
  author = {Meignant, Cl\'ement and Markham, Damian and Grosshans, Fr\'ed\'eric},
  journal = {Phys. Rev. A},
  volume = {100},
  issue = {5},
  pages = {052333},
  numpages = {6},
  year = {2019},
  month = {Nov},
  publisher = {American Physical Society},
  doi = {10.1103/PhysRevA.100.052333},
  url = {https://link.aps.org/doi/10.1103/PhysRevA.100.052333}
}

@inproceedings{Fischer2021,
  title = {Distributing Graph States Across Quantum Networks},
  url = {http://dx.doi.org/10.1109/QCE52317.2021.00049},
  DOI = {10.1109/qce52317.2021.00049},
  booktitle = {2021 IEEE International Conference on Quantum Computing and Engineering (QCE)},
  publisher = {IEEE},
  author = {Fischer,  Alex and Towsley,  Don},
  year = {2021},
  month = oct,
  pages = {324–333}
}

@article{Sprague1994,
  title = {Routings for involutions of a hypercube},
  volume = {48},
  ISSN = {0166-218X},
  url = {http://dx.doi.org/10.1016/0166-218X(92)00126-7},
  DOI = {10.1016/0166-218x(92)00126-7},
  number = {2},
  journal = {Discrete Applied Mathematics},
  publisher = {Elsevier BV},
  author = {Sprague,  Alan P. and Tamaki,  Hisao},
  year = {1994},
  month = jan,
  pages = {175–186}
}

@article{Harary1988,
  title = {A survey of the theory of hypercube graphs},
  volume = {15},
  ISSN = {0898-1221},
  url = {http://dx.doi.org/10.1016/0898-1221(88)90213-1},
  DOI = {10.1016/0898-1221(88)90213-1},
  number = {4},
  journal = {Computers \& Mathematics with Applications},
  publisher = {Elsevier BV},
  author = {Harary,  Frank and Hayes,  John P. and Wu,  Horng-Jyh},
  year = {1988},
  pages = {277–289}
}

@article{Fink2007,
  title = {Perfect matchings extend to Hamilton cycles in hypercubes},
  volume = {97},
  ISSN = {0095-8956},
  url = {http://dx.doi.org/10.1016/j.jctb.2007.02.007},
  DOI = {10.1016/j.jctb.2007.02.007},
  number = {6},
  journal = {Journal of Combinatorial Theory,  Series B},
  publisher = {Elsevier BV},
  author = {Fink,  Jiří},
  year = {2007},
  month = nov,
  pages = {1074–1076}
}

@book{Diestel2025,
  title = {Graph Theory},
  ISBN = {9783662701072},
  ISSN = {2197-5612},
  url = {http://dx.doi.org/10.1007/978-3-662-70107-2},
  DOI = {10.1007/978-3-662-70107-2},
  journal = {Graduate Texts in Mathematics},
  publisher = {Springer Berlin Heidelberg},
  author = {Diestel,  Reinhard},
  year = {2025}
}

@article{Gu1997,
  title = {Routing a Permutation in the Hypercube by Two Sets of Edge Disjoint Paths},
  volume = {44},
  ISSN = {0743-7315},
  url = {http://dx.doi.org/10.1006/jpdc.1997.1358},
  DOI = {10.1006/jpdc.1997.1358},
  number = {2},
  journal = {Journal of Parallel and Distributed Computing},
  publisher = {Elsevier BV},
  author = {Gu,  Qian-Ping and Tamaki,  Hisao},
  year = {1997},
  month = aug,
  pages = {147–152}
}

@article{Choi1993,
  title = {Rearrangeable Circuit-Switched Hypercube Architectures for Routing Permutations},
  volume = {19},
  ISSN = {0743-7315},
  url = {http://dx.doi.org/10.1006/jpdc.1993.1097},
  DOI = {10.1006/jpdc.1993.1097},
  number = {2},
  journal = {Journal of Parallel and Distributed Computing},
  publisher = {Elsevier BV},
  author = {Choi,  S.B. and Somani,  A.K.},
  year = {1993},
  month = oct,
  pages = {125–130}
}

@article{Shen1994,
  title = {Realization of an arbitrary permutation on a hypercube},
  volume = {51},
  ISSN = {0020-0190},
  url = {http://dx.doi.org/10.1016/0020-0190(94)90002-7},
  DOI = {10.1016/0020-0190(94)90002-7},
  number = {5},
  journal = {Information Processing Letters},
  publisher = {Elsevier BV},
  author = {Shen,  Xiaojun and Hu,  Qing and Liang,  Weifa},
  year = {1994},
  month = sep,
  pages = {237–243}
}

@article{PrezCastro2024,
  title = {Simulation of Fidelity in Entanglement-Based Networks with Repeater Chains},
  volume = {14},
  ISSN = {2076-3417},
  url = {http://dx.doi.org/10.3390/app142311270},
  DOI = {10.3390/app142311270},
  number = {23},
  journal = {Applied Sciences},
  publisher = {MDPI AG},
  author = {Pérez Castro,  David and Fernández Vilas,  Ana and Fernández Veiga,  Manuel and Blanco Rodríguez,  Mateo and Díaz Redondo,  Rebeca P.},
  year = {2024},
  month = dec,
  pages = {11270}
}

@article{Wei2022,
  title = {Towards Real‐World Quantum Networks: A Review},
  volume = {16},
  ISSN = {1863-8899},
  url = {http://dx.doi.org/10.1002/lpor.202100219},
  DOI = {10.1002/lpor.202100219},
  number = {3},
  journal = {Laser \& Photonics Reviews},
  publisher = {Wiley},
  author = {Wei,  Shi‐Hai and Jing,  Bo and Zhang,  Xue‐Ying and Liao,  Jin‐Yu and Yuan,  Chen‐Zhi and Fan,  Bo‐Yu and Lyu,  Chen and Zhou,  Dian‐Li and Wang,  You and Deng,  Guang‐Wei and Song,  Hai‐Zhi and Oblak,  Daniel and Guo,  Guang‐Can and Zhou,  Qiang},
  year = {2022},
pages={2100219},
  month = jan 
}

@article{Azuma2023,
  title = {Quantum repeaters: From quantum networks to the quantum internet},
  author = {Azuma, Koji and Economou, Sophia E. and Elkouss, David and Hilaire, Paul and Jiang, Liang and Lo, Hoi-Kwong and Tzitrin, Ilan},
  journal = {Rev. Mod. Phys.},
  volume = {95},
  issue = {4},
  pages = {045006},
  numpages = {66},
  year = {2023},
  month = {Dec},
  publisher = {American Physical Society},
  doi = {10.1103/RevModPhys.95.045006},
  url = {https://link.aps.org/doi/10.1103/RevModPhys.95.045006}
}

@inproceedings{Kozlowski2019,
  series = {NANOCOM ’19},
  title = {Towards Large-Scale Quantum Networks},
  url = {http://dx.doi.org/10.1145/3345312.3345497},
  DOI = {10.1145/3345312.3345497},
  booktitle = {Proceedings of the Sixth Annual ACM International Conference on Nanoscale Computing and Communication},
  publisher = {ACM},
  author = {Kozlowski,  Wojciech and Wehner,  Stephanie},
  year = {2019},
  month = sep,
  pages = {1–7},
  collection = {NANOCOM ’19}
}

@article{Wehner2018,
  title = {Quantum internet: A vision for the road ahead},
  volume = {362},
  ISSN = {1095-9203},
  url = {http://dx.doi.org/10.1126/science.aam9288},
  DOI = {10.1126/science.aam9288},
  number = {6412},
  journal = {Science},
  publisher = {American Association for the Advancement of Science (AAAS)},
  author = {Wehner,  Stephanie and Elkouss,  David and Hanson,  Ronald},
  year = {2018},
pages = {eaam9288},
  month = oct 
}

@article{Kimble2008,
  title = {The quantum internet},
  volume = {453},
  ISSN = {1476-4687},
  url = {http://dx.doi.org/10.1038/nature07127},
  DOI = {10.1038/nature07127},
  number = {7198},
  journal = {Nature},
  publisher = {Springer Science and Business Media LLC},
  author = {Kimble,  H. J.},
  year = {2008},
  month = jun,
  pages = {1023–1030}
}

@book{Wolf2021,
  title = {Quantum Key Distribution: An Introduction with Exercises},
  ISBN = {9783030739911},
  ISSN = {1616-6361},
  url = {http://dx.doi.org/10.1007/978-3-030-73991-1},
  DOI = {10.1007/978-3-030-73991-1},
  journal = {Lecture Notes in Physics},
  publisher = {Springer International Publishing},
  author = {Wolf,  Ramona},
  year = {2021}
}

@article{Cao2022,
  title = {The Evolution of Quantum Key Distribution Networks: On the Road to the Qinternet},
  volume = {24},
  ISSN = {2373-745X},
  url = {http://dx.doi.org/10.1109/COMST.2022.3144219},
  DOI = {10.1109/comst.2022.3144219},
  number = {2},
  journal = {IEEE Communications Surveys \& Tutorials},
  publisher = {Institute of Electrical and Electronics Engineers (IEEE)},
  author = {Cao,  Yuan and Zhao,  Yongli and Wang,  Qin and Zhang,  Jie and Ng,  Soon Xin and Hanzo,  Lajos},
  year = {2022},
  pages = {839–894}
}

@article{Mehic2020,
  title = {Quantum Key Distribution: A Networking Perspective},
  volume = {53},
  ISSN = {1557-7341},
  url = {http://dx.doi.org/10.1145/3402192},
  DOI = {10.1145/3402192},
  number = {5},
  journal = {ACM Computing Surveys},
  publisher = {Association for Computing Machinery (ACM)},
  author = {Mehic,  Miralem and Niemiec,  Marcin and Rass,  Stefan and Ma,  Jiajun and Peev,  Momtchil and Aguado,  Alejandro and Martin,  Vicente and Schauer,  Stefan and Poppe,  Andreas and Pacher,  Christoph and Voznak,  Miroslav},
  year = {2020},
  month = sep,
  pages = {1–41}
}

@article{Sekatski2020,
  title = {Optimal distributed sensing in noisy environments},
  author = {Sekatski, P. and W\"olk, S. and D\"ur, W.},
  journal = {Phys. Rev. Res.},
  volume = {2},
  issue = {2},
  pages = {023052},
  numpages = {8},
  year = {2020},
  month = {Apr},
  publisher = {American Physical Society},
  doi = {10.1103/PhysRevResearch.2.023052},
  url = {https://link.aps.org/doi/10.1103/PhysRevResearch.2.023052}
}

@article{Zhang2021,
  title = {Distributed quantum sensing},
  volume = {6},
  ISSN = {2058-9565},
  url = {http://dx.doi.org/10.1088/2058-9565/abd4c3},
  DOI = {10.1088/2058-9565/abd4c3},
  number = {4},
  journal = {Quantum Science and Technology},
  publisher = {IOP Publishing},
  author = {Zhang,  Zheshen and Zhuang,  Quntao},
  year = {2021},
  month = jul,
  pages = {043001}
}

@article{Bugalho2025,
  title = {Private and Robust States for Distributed Quantum Sensing},
  volume = {9},
  ISSN = {2521-327X},
  url = {http://dx.doi.org/10.22331/q-2025-01-15-1596},
  DOI = {10.22331/q-2025-01-15-1596},
  journal = {Quantum},
  publisher = {Verein zur Forderung des Open Access Publizierens in den Quantenwissenschaften},
  author = {Bugalho,  Luís and Hassani,  Majid and Omar,  Yasser and Markham,  Damian},
  year = {2025},
  month = jan,
  pages = {1596}
}

@article{Cacciapuoti2020,
  title = {Quantum Internet: Networking Challenges in Distributed Quantum Computing},
  volume = {34},
  ISSN = {1558-156X},
  url = {http://dx.doi.org/10.1109/MNET.001.1900092},
  DOI = {10.1109/mnet.001.1900092},
  number = {1},
  journal = {IEEE Network},
  publisher = {Institute of Electrical and Electronics Engineers (IEEE)},
  author = {Cacciapuoti,  Angela Sara and Caleffi,  Marcello and Tafuri,  Francesco and Cataliotti,  Francesco Saverio and Gherardini,  Stefano and Bianchi,  Giuseppe},
  year = {2020},
  month = jan,
  pages = {137–143}
}

@article{Main2025,
  title = {Distributed quantum computing across an optical network link},
  volume = {638},
  ISSN = {1476-4687},
  url = {http://dx.doi.org/10.1038/s41586-024-08404-x},
  DOI = {10.1038/s41586-024-08404-x},
  number = {8050},
  journal = {Nature},
  publisher = {Springer Science and Business Media LLC},
  author = {Main,  D. and Drmota,  P. and Nadlinger,  D. P. and Ainley,  E. M. and Agrawal,  A. and Nichol,  B. C. and Srinivas,  R. and Araneda,  G. and Lucas,  D. M.},
  year = {2025},
  month = feb,
  pages = {383–388}
}

@article{Fitzsimons2017,
  title = {Private quantum computation: an introduction to blind quantum computing and related protocols},
  volume = {3},
  ISSN = {2056-6387},
  url = {http://dx.doi.org/10.1038/s41534-017-0025-3},
  DOI = {10.1038/s41534-017-0025-3},
  number = {1},
  journal = {npj Quantum Information},
  publisher = {Springer Science and Business Media LLC},
  author = {Fitzsimons,  Joseph F.},
  year = {2017},
 pages = {23},
  month = jun 
}

@article{Barz2012,
  title = {Demonstration of Blind Quantum Computing},
  volume = {335},
  ISSN = {1095-9203},
  url = {http://dx.doi.org/10.1126/science.1214707},
  DOI = {10.1126/science.1214707},
  number = {6066},
  journal = {Science},
  publisher = {American Association for the Advancement of Science (AAAS)},
  author = {Barz,  Stefanie and Kashefi,  Elham and Broadbent,  Anne and Fitzsimons,  Joseph F. and Zeilinger,  Anton and Walther,  Philip},
  year = {2012},
  month = jan,
  pages = {303–308}
}

@article{Freund2024,
  title = {Flexible quantum data bus for quantum networks},
  author = {Freund, Julia and Pirker, Alexander and D\"ur, Wolfgang},
  journal = {Phys. Rev. Res.},
  volume = {6},
  issue = {3},
  pages = {033267},
  numpages = {11},
  year = {2024},
  month = {Sep},
  publisher = {American Physical Society},
  doi = {10.1103/PhysRevResearch.6.033267},
  url = {https://link.aps.org/doi/10.1103/PhysRevResearch.6.033267}
}

@misc{Schoute2016,
  doi = {10.48550/ARXIV.1610.05238},
  url = {https://arxiv.org/abs/1610.05238},
  author = {Schoute,  Eddie and Mancinska,  Laura and Islam,  Tanvirul and Kerenidis,  Iordanis and Wehner,  Stephanie},
  title = {Shortcuts to quantum network routing},
  publisher = {arXiv},
  year = {2016},
  copyright = {arXiv.org perpetual,  non-exclusive license}
}

@article{Devulapalli2024,
  title = {Quantum routing with teleportation},
  author = {Devulapalli, Dhruv and Schoute, Eddie and Bapat, Aniruddha and Childs, Andrew M. and Gorshkov, Alexey V.},
  journal = {Phys. Rev. Res.},
  volume = {6},
  issue = {3},
  pages = {033313},
  numpages = {12},
  year = {2024},
  month = {Sep},
  publisher = {American Physical Society},
  doi = {10.1103/PhysRevResearch.6.033313},
  url = {https://link.aps.org/doi/10.1103/PhysRevResearch.6.033313}
}

@ARTICLE{Asavanant2019-ks,
  title     = "Generation of time-domain-multiplexed two-dimensional cluster
               state",
  author    = "Asavanant, Warit and Shiozawa, Yu and Yokoyama, Shota and
               Charoensombutamon, Baramee and Emura, Hiroki and Alexander,
               Rafael N and Takeda, Shuntaro and Yoshikawa, Jun-Ichi and
               Menicucci, Nicolas C and Yonezawa, Hidehiro and Furusawa, Akira",
  journal   = "Science",
  publisher = "American Association for the Advancement of Science (AAAS)",
  volume    =  366,
  number    =  6463,
  pages     = "373--376",
  month     =  oct,
  year      =  2019,
  copyright = "http://www.sciencemag.org/about/science-licenses-journal-article-reuse",
  language  = "en"
}

@article{Dr2025,
  title = {Long-ranged gates in quantum computation architectures with limited connectivity},
  volume = {11},
  ISSN = {2058-9565},
  url = {http://dx.doi.org/10.1088/2058-9565/ae20b6},
  DOI = {10.1088/2058-9565/ae20b6},
  number = {1},
  journal = {Quantum Science and Technology},
  publisher = {IOP Publishing},
  author = {D\"{u}r,  Wolfgang},
  year = {2025},
  month = nov,
  pages = {015013}
}

@book{nielsen_chuang_2010, 
place={Cambridge}, 
title={\emph{Quantum Computation and Quantum Information: 10th Anniversary Edition}}, DOI={10.1017/CBO9780511976667}, 
publisher={Cambridge University Press}, 
author={Nielsen, Michael A. and Chuang, Isaac L.}, 
year={2010}}

\end{document}